\theoremstyle{plain}
\newtheorem{theorem}{Theorem}[section]
\newtheorem{proposition}[theorem]{Proposition}
\newtheorem{corollary}[theorem]{Corollary}
\theoremstyle{definition}
\newtheorem{notation}[theorem]{Notation}
\newtheorem{definition}[theorem]{Definition}
\theoremstyle{remark}
\newtheorem{remark}[theorem]{Remark}
\begin{document}
\title[$L$--Primitive Words in Submonoids]{$L$--Primitive Words in Submonoids}
\author[S. N. Singh]{Shubh Narayan Singh}
\address{Department of Mathematics, Central University of Bihar, Patna, India}
\email{shubh@cub.ac.in}

\author[K. V. Krishna]{K. V. Krishna}
\address{Department of Mathematics, IIT Guwahati, Guwahati, India}
\email{kvk@iitg.ac.in}


\begin{abstract}
This work considers a natural generalization of primitivity with respect to a language. Given a language $L$, a nonempty word $w$ is said to be $L$-primitive if $w$ is not a proper power of any word in $L$. After ascertaining the number of primitive words in submonoids  of a free monoid,  the work  proceeds to count $L$-primitive words in submonoids of a free monoid. The work also studies the distribution of $L$-primitive words in certain subsets of free monoids.
\end{abstract}

\subjclass[]{68Q70, 20M35, 54H15}

\keywords{Free monoids, Primitive words, Numerical monoids.}

\maketitle

\section*{Introduction}

A nonempty word which is not a power of any other word is called a primitive word. It is well known that every nonempty word can be uniquely expressed as a power of a primitive word \cite{lyndon62}. The study of primitivity of words is often the first step towards the understanding of words and plays an important role in the theory of languages. Ito \emph{et al.} have investigated the number of primitive words in the languages accepted by automata \cite{ito88}.  Shyr and Tseng have proved that any noncommutative submonoid of a free monoid contains infinitely many primitive words \cite{shyr84}. In the literature, there are various types of generalizations/extensions of the classic definition of primitive words \cite{czeizler10,michael04,hsiao02,kari98}. We propose yet another generalization of primitive word, viz. $L$-primitive word -- a  nonempty word that is not a proper power of any word in a given language $L$.

In this paper, we first investigate the primitive words in the submonoids of free monoids. We could ascertain that the number of primitive words in a submonoid  of a free monoid is either at most one or infinity. Then, we study the distribution of $L$-primitive words in certain subsets of free monoids. In particular, we target to count the $L$-primitive words in the submonoids of free monoids.

The paper is organized as follows. In Section 1, we will present some necessary preliminaries of the paper. We introduce the concept of $L$-primitive words in Section 2 and study some basic properties. In Section 3, we count the number of $L$-primitive words in a language $L$ as well as in the submonoids of a free monoid. Finally, Section 4 concludes the paper.

\section{Preliminaries}

In this section, we present some basic definitions and fix our notations. For more details one may refer \cite{berstel85,lothaire83,shubh12}.

Let $A$ be a nonempty finite set called an \emph{alphabet} with its elements as \emph{letters}. The free monoid over $A$ is denoted by $A^*$ whose elements are called words, and $\varepsilon$ denotes the identity element of $A^*$ -- the empty word.  The set of all nonempty words over $A$ is denoted by $A^+$, i.e. $A^+ = A^*\setminus \{\varepsilon\}$ .

A word $u$ is said to be a \emph{prefix} of a word $v$ if there exists a word $t$ such that $ut = v$. A set $X$ of words is called a \emph{prefix set} if no element of $X$ is a prefix of another word of $X$. A \emph{power} of a word $u$ is a word of the form $u^k$ for some $k \in \mathds{N} = \{0, 1, 2, \ldots\}$ -- the set of natural numbers. It is convenient to set $u^0 = \varepsilon$, for each word $u$. If $k \in \mathds{N} \setminus \{0, 1\}$, we say that $u^k$ is a \emph{proper power} of $u$. A word $x \in A^+$ is said to be a \emph{primitive word} if it is not a proper power of another word in $A^*$, i.e. for $u \in A^*$, \[x = u^k \;\Longrightarrow\; k = 1.\]
For a subset $X$ of $A^*$, we denote the set of all primitive words in $X$ by $X_p$.  We recall the following well known property of primitive words.

\begin{proposition}\label{c5.p.pr}
For every $w \in A^+$,  there exists a unique primitive word $u$ and a unique integer $k \ge 1$ such that $w = u^k$.
\end{proposition}

The unique primitive word $u$ obtained in Proposition \ref{c5.p.pr} is called the \emph{primitive root} of $w$, denoted by $\sqrt{w}$. By a language over an alphabet $A$ is meant a subset of $A^*$. The \emph{root} of a language $L$ ($\subseteq A^*$), denoted by $\sqrt{L}$, is defined as \[\sqrt{L} = \big\{\sqrt{w} \in A_p^*\;\big|\; w \in L\setminus\{\varepsilon\}\big\}.\] A language $L$ is said to be \emph{commutative} if $uv = vu$, for all $u,v \in L$. It is known that a language $L$ is commutative if and only if there exists $w \in A^*$ such that $L \subseteq \{w\}^*$.

Now, we recall some properties of numerical monoids from \cite{rosales09}. A \emph{numerical monoid} is a submonoid of the monoid  $(\mathds{N}, +)$ whose complement in $\mathds{N}$ is finite. For a nonempty subset $X$ of $\mathds{N}$, the submonoid of $\mathds{N}$ generated by $X$ is denoted by $\langle X \rangle$, i.e.
\[\langle X \rangle = \{\lambda_1 x_1 + \cdots + \lambda_n x_n \;|\; n, \lambda_i \in \mathds{N}, x_i \in X, \forall i( 1 \le i \le n)\}.\]

\begin{theorem}\label{c5.t.numimonoidgcd1}
Let $X$ be a nonempty subset of $\mathds{N}$. The submonoid $\langle X \rangle$ is a numerical monoid if and only if $\gcd(X) = 1$.
\end{theorem}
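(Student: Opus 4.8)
The plan is to prove both directions of the biconditional in Theorem~\ref{c5.t.numimonoidgcd1}. For the forward direction, suppose $\langle X \rangle$ is a numerical monoid, so its complement $\mathds{N} \setminus \langle X \rangle$ is finite. Let $d = \gcd(X)$. Every element of $\langle X \rangle$ is an $\mathds{N}$-linear combination of elements of $X$, and since $d$ divides each $x \in X$, it divides every such combination; hence $\langle X \rangle \subseteq d\mathds{N}$. If $d > 1$, then infinitely many positive integers (for instance all integers not divisible by $d$) lie outside $\langle X \rangle$, contradicting finiteness of the complement. Therefore $d = 1$. This direction is essentially a one-line divisibility observation.

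The reverse direction is the substantive part: assuming $\gcd(X) = 1$, I must show the complement of $\langle X \rangle$ in $\mathds{N}$ is finite, equivalently that all sufficiently large integers lie in $\langle X \rangle$. First I would reduce to a finite generating set: because $\gcd(X) = 1$, a standard fact (a finite subcollection already achieving gcd $1$, via the ascending chain of gcds of finite subsets stabilizing at $1$) lets me pick $x_1, \dots, x_k \in X$ with $\gcd(x_1, \dots, x_k) = 1$, and it suffices to prove the claim for $\langle x_1, \dots, x_k \rangle$ since this is contained in $\langle X \rangle$. By B\'ezout's identity there exist integers $a_1, \dots, a_k$ (possibly negative) with $\sum a_i x_i = 1$. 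The plan is then to convert this \emph{integer} combination into a \emph{nonnegative} one for every large enough target $n$.

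The key construction is as follows. Set $N = x_1$ (any fixed generator). For each residue $r \in \{0, 1, \dots, N-1\}$, I use the B\'ezout relation to produce some nonnegative combination $c_r \in \langle X \rangle$ with $c_r \equiv r \pmod{N}$: multiplying $\sum a_i x_i = 1$ by $r$ gives an integer combination equal to $r$, and adding a large multiple of $N = x_1$ to each negative coefficient makes all coefficients nonnegative while only shifting the value by a multiple of $N$, yielding $c_r \in \langle X \rangle$ with $c_r \equiv r$. Let $M = \max_r c_r$. Then for any integer $n \ge M$, write $n \equiv r \pmod N$; since $n \ge M \ge c_r$ and $n - c_r$ is a nonnegative multiple of $N = x_1$, we have $n = c_r + \tfrac{n - c_r}{N}\, x_1 \in \langle X \rangle$. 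Thus every $n \ge M$ belongs to $\langle X \rangle$, so the complement is finite and $\langle X \rangle$ is a numerical monoid.

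The main obstacle I anticipate is the bookkeeping in the reverse direction: passing from a B\'ezout identity with possibly negative coefficients to genuine nonnegative combinations lying in $\langle X \rangle$, and organizing this uniformly over all residue classes modulo a chosen generator so as to extract a single threshold $M$. The forward direction and the reduction to finitely many generators are routine; the residue-class covering argument is where care is needed, though it is a well-known technique (closely related to the Chicken McNugget / Frobenius coin problem) and references such as \cite{rosales09} already contain it.
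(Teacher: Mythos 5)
Your proof is correct. Note, however, that the paper does not prove this statement at all: Theorem~\ref{c5.t.numimonoidgcd1} is recalled without proof from \cite{rosales09} as part of the preliminaries on numerical monoids, so there is no argument in the paper to compare against. Your write-up is the standard one: the forward direction is the divisibility observation $\langle X \rangle \subseteq d\mathds{N}$, and the reverse direction is the B\'ezout/residue-class covering argument (reduce to a finite subset achieving $\gcd = 1$, realize one nonnegative representative $c_r \in \langle X\rangle$ in each residue class modulo a fixed generator, then absorb everything above $M = \max_r c_r$ by adding multiples of that generator). The only point to tighten is the choice of $N = x_1$: you must pick a \emph{nonzero} element of $X$ as the modulus (one exists, since $\gcd(X) = 1$ forces $X \neq \{0\}$), and for the reduction step you should say explicitly that the gcds of increasing finite subsets of $X$ form a nonincreasing sequence of positive integers that stabilizes at $\gcd(X) = 1$. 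With those small touches the argument is complete and matches the treatment in \cite{rosales09}.
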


\begin{theorem}\label{c5.t.num-umgsf}
Every numerical monoid admits a unique finite minimal set of generators.
\end{theorem}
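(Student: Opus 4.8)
The plan is to exhibit the minimal generating set explicitly as the set of \emph{irreducible} elements and then verify that it has the required properties. Let $S$ be a numerical monoid and write $S^* = S \setminus \{0\}$ for its set of nonzero elements; since $\mathds{N} \setminus S$ is finite, $S^*$ is nonempty. Define
\[ M = S^* \setminus (S^* + S^*), \qquad S^* + S^* = \{a + b \;|\; a, b \in S^*\}, \]
so that $M$ consists of precisely those nonzero elements of $S$ that cannot be written as a sum of two (strictly smaller) nonzero elements of $S$. My claim is that $M$ is the unique minimal set of generators, which I would establish through the three steps below.

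First I would show that $\langle M \rangle = S$ by strong induction on the value of an element $s \in S^*$. If $s \in M$ there is nothing to prove. Otherwise $s = a + b$ with $a, b \in S^*$, and since neither summand is $0$, both are strictly smaller than $s$; by the induction hypothesis each lies in $\langle M \rangle$, whence $s \in \langle M \rangle$. Together with $0 \in \langle M \rangle$, this gives generation.

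Next I would argue that $M$ is finite. Because $\mathds{N} \setminus S$ is finite, there is an integer $N \ge 1$ such that every integer $n \ge N$ belongs to $S$. Letting $c = \min S^*$, any $s \in S$ with $s \ge N + c$ satisfies $s - c \ge N$, hence $s - c \in S^*$ and $s = c + (s - c) \in S^* + S^*$, so $s \notin M$. Therefore $M$ is contained in the finite set $\{s \in S \;|\; s < N + c\}$ and is finite.

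Finally, for uniqueness and minimality I would prove that $M$ is contained in every generating set $G$ of $S$; here it suffices to consider $G \subseteq S^*$, since $0$ is redundant as a generator. Given $m \in M$, write $m$ as an $\mathds{N}$-linear combination of elements of $G$; since $m \neq 0$, at least one coefficient is positive, and if the combination involved a second positive contribution then $m$ would be a sum of two elements of $S^*$, contradicting $m \in M$. Hence $m$ equals a single generator, so $m \in G$, giving $M \subseteq G$ for every generating set $G$. Since $M$ itself generates $S$, any minimal generating set $G$ satisfies both $M \subseteq G$ and the impossibility of a proper generating subset, forcing $G = M$; this yields existence and uniqueness simultaneously. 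The step I expect to be the most delicate is this last one: one must be careful to restrict to generating sets of nonzero elements and to phrase the ``single generator'' argument so that it genuinely rules out any nontrivial decomposition of an irreducible element, whereas the induction and the finiteness bound are routine once $c$ and $N$ are in place.
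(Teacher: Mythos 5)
Your proof is correct. The paper itself gives no argument for this statement---it is recalled without proof from the cited monograph of Rosales and Garc\'{\i}a-S\'anchez---and your construction of the minimal generating set as $S^* \setminus (S^* + S^*)$, together with the induction for generation, the bound $N + c$ for finiteness, and the observation that every irreducible element must appear in any generating set, is precisely the standard argument found there.
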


\begin{theorem}\label{c5.t.smisonm}
Any nontrivial submonoid of $\mathds{N}$ is isomorphic to a numerical monoid.
\end{theorem}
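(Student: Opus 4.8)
The plan is to show that rescaling a nontrivial submonoid $M$ of $\mathds{N}$ by the reciprocal of $d = \gcd(M \setminus \{0\})$ produces a numerical monoid isomorphic to $M$, and then invoke Theorem \ref{c5.t.numimonoidgcd1}. First I would observe that since $M$ is nontrivial it contains a positive integer, so $M \setminus \{0\}$ is nonempty and $d = \gcd(M \setminus \{0\})$ is a well-defined positive integer. Here I need the fact that the greatest common divisor of an arbitrary nonempty set of positive integers is attained on some finite subset; this holds because the gcds of finite subsets form a non-increasing sequence of positive integers, which must stabilize. By definition, $d$ divides every element of $M$.

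Next I would define the map $\phi : M \to \mathds{N}$ by $\phi(x) = x/d$. Since $d \mid x$ for every $x \in M$, this is well-defined; it sends $0$ to $0$ and satisfies $\phi(x + y) = \phi(x) + \phi(y)$, so it is a monoid homomorphism, and it is plainly injective. Writing $M' = \phi(M)$ for its image, $M'$ is a submonoid of $(\mathds{N}, +)$ and $\phi$ restricts to an isomorphism $M \cong M'$.

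It then remains to check that $M'$ is a numerical monoid. Because $M'$ is already a submonoid, we have $M' = \langle M' \rangle$, so Theorem \ref{c5.t.numimonoidgcd1} reduces the claim to verifying that $\gcd(M') = 1$. But $\gcd(M') = \gcd\{x/d : x \in M \setminus \{0\}\} = \gcd(M \setminus \{0\})/d = d/d = 1$, using that dividing every element of a set of positive integers by a common divisor divides their gcd by the same factor.

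The step I expect to require the most care is the legitimacy of manipulating the gcd of a possibly infinite set: one must confirm both that the gcd exists (is realized by finitely many elements) and that it scales correctly under division by $d$. Once this bookkeeping is in place, the isomorphism $M \cong M'$ together with Theorem \ref{c5.t.numimonoidgcd1} completes the argument.
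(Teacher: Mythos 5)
The paper states this result without proof, recalling it from the cited reference on numerical semigroups, so there is no in-paper argument to compare against; your proof is the standard one from that source. Your argument is correct: dividing by $d=\gcd(M\setminus\{0\})$ gives an isomorphism onto a submonoid $M'$ with $\gcd(M')=1$, and Theorem~\ref{c5.t.numimonoidgcd1} (applied to $X=M'$, noting $M'=\langle M'\rangle$) shows $M'$ is a numerical monoid; your care with the gcd of a possibly infinite set and its scaling under division is exactly the right bookkeeping.
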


\section{$L$-primitive words}

In this section, we introduce the notion of primitive words relative to a language $L$, called $L$-primitive words, and obtain some properties related to $L$-primitive words. We prove that every primitive word is an $L$-primitive word so that the latter notion is a generalization of the former one. Unless it is specified otherwise, in what follows, $L$ is an arbitrary language over $A$.

\begin{definition}
A word $x \in A^+$ is said to be an $L$-\emph{primitive word} if $x$ is not a proper power of any word in $L$, i.e., for $u \in L$, \[x = u^k \;\Longrightarrow\; k = 1.\]
\end{definition}

\begin{notation}  Let $X \subseteq A^*$ and $X^c$ denotes the complement of $X$ in $A^*$.
\begin{itemize}
\item[(i)] The set of $L$-primitive words in $X$ is denoted by $X_{L\mbox{-}p}$.
\item[(ii)] The set $(X^*)_{L\mbox{-}p}$ of $L$-primitive words in $X^*$ is simply denoted by $X^*_{L\mbox{-}p}$.
\item[(iii)] The set $(X^c)_{L\mbox{-}p}$ of $L$-primitive words in $X^c$ is simply denoted by $X^c_{L\mbox{-}p}$.
\end{itemize}
\end{notation}

We begin with some basic properties of $L$-primitive words.

\begin{remark}\label{c5.r.ALpQ}\
\begin{enumerate}
\item[\rm(i)] If $L = \varnothing$, then $A^*_{L\mbox{-}p} = A^+$, the set of all nonempty words over $A$.
\item[\rm(ii)] If $L = A^*$, then $A^*_{L\mbox{-}p} = A_p^*$, the set of all primitive words over $A$.
\end{enumerate}
\end{remark}

\begin{proposition}\label{c5.p.comp-lpri}
If $L_1$ and $L_2$ are two subsets of $A^*$, then \[L_1 \subseteq L_2 \;\Longrightarrow \; A^*_{L_2\mbox{-}p} \subseteq A^*_{L_1\mbox{-}p}.\]
\end{proposition}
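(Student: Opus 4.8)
The plan is to prove the contrapositive-style implication directly by unwinding the definition of $L$-primitivity. Recall that $A^*_{L\text{-}p}$ is the set of words $x \in A^+$ that are \emph{not} a proper power of any word in $L$; equivalently, $x \notin A^*_{L\text{-}p}$ precisely when there exist $u \in L$ and an integer $k \ge 2$ with $x = u^k$. So the claim $A^*_{L_2\text{-}p} \subseteq A^*_{L_1\text{-}p}$ amounts to showing: if $x$ fails to be $L_1$-primitive, then $x$ fails to be $L_2$-primitive, under the hypothesis $L_1 \subseteq L_2$.

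First I would take an arbitrary $x \in A^*_{L_2\text{-}p}$ and argue that $x \in A^*_{L_1\text{-}p}$. By definition, $x$ is a nonempty word that is not a proper power of any word in $L_2$. To show $x$ is $L_1$-primitive, suppose toward establishing the required implication that $x = u^k$ for some $u \in L_1$; the goal is to conclude $k = 1$. Here the key step is the containment $L_1 \subseteq L_2$: since $u \in L_1$, we have $u \in L_2$ as well, so $x = u^k$ exhibits $x$ as a power of a word in $L_2$. Because $x$ is $L_2$-primitive, this forces $k = 1$, which is exactly the condition for $x$ to be $L_1$-primitive.

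I do not expect any serious obstacle here; the statement is essentially a direct consequence of the set-theoretic fact that a smaller language provides fewer potential ``witnesses'' to non-primitivity, so demanding primitivity relative to the larger language $L_2$ is the stronger condition. The only point requiring a little care is the bookkeeping around the quantifier in the definition of $L$-primitive: one must check both that $x$ remains nonempty (it is the same word $x \in A^+$ throughout) and that the implication $x = u^k \Rightarrow k = 1$ is verified for \emph{every} $u \in L_1$, which the argument above handles uniformly by passing each such $u$ into $L_2$. Since $x \in A^*_{L_2\text{-}p}$ was arbitrary, this yields $A^*_{L_2\text{-}p} \subseteq A^*_{L_1\text{-}p}$, completing the proof.
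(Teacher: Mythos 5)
Your proof is correct and is essentially the same argument as the paper's: both transfer a witness $u \in L_1$ with $x = u^k$ into $L_2$ via the containment $L_1 \subseteq L_2$ and invoke $L_2$-primitivity to rule out $k > 1$. The paper phrases this as a proof by contradiction while you argue directly, but this is only a cosmetic difference.
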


\begin{proof}
On the contrary, let us assume that $A^*_{L_2\mbox{-}p} \not\subseteq A^*_{L_1\mbox{-}p}$. Then there exists \break $w \in A^*_{L_2\mbox{-}p}$, but $w \not\in A^*_{L_1\mbox{-}p}$. Since $w \not\in A^*_{L_1\mbox{-}p}$, there exists $u \in L_1$ such that $w = u^k$, for some $k > 1$. In view of hypothesis, we have $u \in L_2$. Consequently, $w \notin A^*_{L_2\mbox{-}p}$; a contradiction.
\end{proof}

In view of Remark \ref{c5.r.ALpQ}(ii), we have the following corollary of Proposition \ref{c5.p.comp-lpri}.

\begin{corollary}\label{c5.c.prim-lprim}
Every primitive word is an $L$-primitive word. Hence, if $|A|\ge 2$, then $|A^*_{L\mbox{-}p}| = \infty$.
\end{corollary}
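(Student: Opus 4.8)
The plan is to derive both assertions directly from Proposition \ref{c5.p.comp-lpri} and Remark \ref{c5.r.ALpQ}(ii), supplemented by the classical existence of infinitely many primitive words over an alphabet of size at least two.

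For the first statement, I would observe that $L \subseteq A^*$ holds trivially for every language $L$. Applying Proposition \ref{c5.p.comp-lpri} with $L_1 = L$ and $L_2 = A^*$ yields $A^*_{A^*\mbox{-}p} \subseteq A^*_{L\mbox{-}p}$. By Remark \ref{c5.r.ALpQ}(ii) the left-hand side equals $A_p^*$, so $A_p^* \subseteq A^*_{L\mbox{-}p}$; that is, every primitive word is $L$-primitive. This part is essentially immediate, since all the real content sits inside the already-established proposition.

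For the second statement, it then suffices to exhibit infinitely many primitive words whenever $|A| \ge 2$. Fix two distinct letters $a, b \in A$ and consider the family $\{ab^n \mid n \ge 1\}$. I would argue that each such word is primitive: if $ab^n = u^k$ for some $u \in A^+$ and $k \ge 2$, then $u$ must contain the unique occurrence of the letter $a$ appearing in $ab^n$, which forces $u^k$ to contain at least $k \ge 2$ occurrences of $a$, a contradiction. Hence the words $ab^n$ are pairwise distinct primitive words, giving $|A_p^*| = \infty$, and by the first part $|A^*_{L\mbox{-}p}| = \infty$.

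The only genuine (though still routine) point is this second statement, whose mild obstacle is verifying the primitivity of the chosen family; the letter-counting argument above settles it cleanly. I would emphasize that this argument is entirely independent of $L$, which is precisely what makes the infinitude of $A^*_{L\mbox{-}p}$ uniform across all languages $L$.
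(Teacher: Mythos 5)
Your argument is correct and follows exactly the route the paper intends: the corollary is derived from Proposition \ref{c5.p.comp-lpri} applied with $L_1 = L$, $L_2 = A^*$, together with Remark \ref{c5.r.ALpQ}(ii), and the infinitude then reduces to the classical fact that $|A_p^*| = \infty$ when $|A| \ge 2$. Your only addition is an explicit verification of that classical fact via the family $\{ab^n \mid n \ge 1\}$, which the paper takes as known; the verification is sound.
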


\begin{remark}
An $L$-primitive word need not be primitive. For instance, let $L = \{abab\} \subseteq \{a, b\}^*$. Clearly, the word $abab$ is an $L$-primitive word, but not a primitive word.
\end{remark}

\begin{definition}
For $w \in A^+$, we define the set of $L$-\emph{primitive roots} of $w$, denoted by $\sqrt[L]{w}$, is defined as
\[\sqrt[L]{w} = \{x \in A^*_{L\mbox{-}p}\;|\; x^k = w, \;\mbox{for some}\; k\geq 1\}.\]
Further, for $X  \subseteq A^*$, the $L$-\emph{primitive root} of $X$, denoted by $\sqrt[L]{X}$, is defined as \[\sqrt[L]{X} = \bigcup_{w \in X\setminus \{\varepsilon\}}\sqrt[L]{w}.\]
\end{definition}

\begin{remark}
The primitive root of a nonempty word is an $L$-primitive root of the word. Thus, if $w \ne \varepsilon$, then $\sqrt[L]{w} \ne \varnothing$.
\end{remark}

\section{$L$-primitive words in various subsets}

This section is divided into two subsections. We investigate $L$-primitive words in some sets related to $L$ itself in Subsection 3.1. Then, we carry on the investigations  on submonoids in Subsection 3.2.

\subsection{$L$-primitive words in $L$}

In this subsection, we make an attempt to investigate $L$-primitive words in  $L$ and also in $L^c$. In this connection, we provide some sufficient conditions and characterizations. In fact, we give a relation between $L$-primitive words and $L$-primitive roots in $L$.

\begin{theorem}
If $\varepsilon \not\in L$, $L \neq \varnothing$ if and only if $L_{L\mbox{-}p} \neq \varnothing$.
\end{theorem}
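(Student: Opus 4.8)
The plan is to separate the biconditional into its two implications and dispatch the easy one first. By the definition of $X_{L\mbox{-}p}$ as the set of $L$-primitive words lying \emph{in} $X$, we have $L_{L\mbox{-}p} \subseteq L$; hence $L_{L\mbox{-}p} \neq \varnothing$ forces $L \neq \varnothing$ at once, giving the reverse implication with no further work. Note that this direction does not even use the hypothesis $\varepsilon \notin L$.

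For the forward implication I would argue by exhibiting an explicit $L$-primitive word inside $L$. Since $L \neq \varnothing$, the set of lengths $\{|v| : v \in L\}$ is a nonempty subset of $\mathds{N}$ and so, by well-ordering, attains a minimum; let $w \in L$ realise it. Because $\varepsilon \notin L$, this shortest word is nonempty, so $w \in A^+$, the ambient set from which $L$-primitive words are drawn. The claim is then that this minimal-length $w$ is itself $L$-primitive, whence $w \in L_{L\mbox{-}p}$ and therefore $L_{L\mbox{-}p} \neq \varnothing$.

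To verify the claim I would proceed by contradiction: suppose $w$ is not $L$-primitive, so that $w = u^k$ for some $u \in L$ and some $k > 1$. Here the hypothesis $\varepsilon \notin L$ does the decisive work, since $u \in L$ gives $u \neq \varepsilon$, hence $|u| \geq 1$, and therefore $|w| = k|u| \geq 2|u| > |u|$. This produces a word $u \in L$ strictly shorter than $w$, contradicting the minimality of $|w|$ and forcing $k = 1$. I expect this final step to be the only real content of the argument, and the main (if modest) obstacle is recognising that $\varepsilon \notin L$ is exactly what rules out the degenerate factorisation $w = \varepsilon^k$: without it the shortest word could be $\varepsilon$ itself, which is barred from $A^+$ and hence from $L_{L\mbox{-}p}$, and the equivalence would fail (for instance when $L = \{\varepsilon\}$).
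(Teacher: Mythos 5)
Your proof is correct and rests on the same idea as the paper's: the well-ordering of word lengths together with the observation that a proper power of a nonempty word of $L$ is strictly longer than that word (which is exactly where $\varepsilon \notin L$ enters). The only difference is packaging — the paper descends from an arbitrary $w \in L$ through successively shorter roots until the process terminates, whereas you jump directly to a length-minimal element of $L$ and show it is $L$-primitive; both arguments are sound.
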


\begin{proof}
Let us assume that $L \neq \varnothing$ and choose $w\in L$.  If $w \in L_{L\mbox{-}p}$, then we are through. Otherwise, there exists $u \in L$ such that $w = u^k$, for some $k > 1$.  Clearly, $|u| < |w|$. If $u \in L_{L\mbox{-}p}$, then we are through. Otherwise, we continue to choose shorter words in $L$ whose power is $w$. But this process terminates at a finite stage and eventually we get a word $x \in L_{L\mbox{-}p}$ and $w = x^m$, for some $m > 1$. Hence, $L_{L\mbox{-}p} \neq \varnothing$. The converse is straightforward.
\end{proof}

\begin{theorem}\label{c5.t.prefix}
If $L \subseteq A^+$ is a prefix set, then $L = L_{L\mbox{-}p}$.
\end{theorem}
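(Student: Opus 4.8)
The plan is to prove the equality by establishing both inclusions, of which only one carries any content. The inclusion $L_{L\mbox{-}p} \subseteq L$ is immediate from the definition, since $L_{L\mbox{-}p}$ is by construction the set of $L$-primitive words that already lie in $L$. Thus the substance of the statement is the reverse inclusion $L \subseteq L_{L\mbox{-}p}$, which amounts to showing that \emph{every} word of $L$ is in fact $L$-primitive.

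To prove $L \subseteq L_{L\mbox{-}p}$, I would argue by contradiction. Suppose some $w \in L$ is not $L$-primitive. Then, by the definition of $L$-primitivity, there is a word $u \in L$ and an integer $k > 1$ with $w = u^k$. The key observation is that such a factorization forces $u$ to be a proper prefix of $w$: because $u \in L \subseteq A^+$, the word $u$ is nonempty, and since $k > 1$ we may write $w = u\, u^{k-1}$ with $u^{k-1} \neq \varepsilon$. Hence $u$ is a prefix of $w$ and $|u| < |w|$, so $u$ and $w$ are two \emph{distinct} elements of $L$ with one a prefix of the other. This contradicts the hypothesis that $L$ is a prefix set. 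Therefore no such $w$ exists, every word of $L$ is $L$-primitive, and $L \subseteq L_{L\mbox{-}p}$.

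I do not expect any genuine obstacle in this argument; the only point requiring care is the elementary but essential step of converting the equation $w = u^k$ with $k > 1$ into the statement that $u$ is a \emph{proper} prefix of $w$, which relies precisely on $u$ being nonempty, guaranteed by the assumption $L \subseteq A^+$. Combining the trivial inclusion $L_{L\mbox{-}p} \subseteq L$ with the inclusion just established yields $L = L_{L\mbox{-}p}$.
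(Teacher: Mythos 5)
Your proposal is correct and follows essentially the same argument as the paper: assume some $w \in L$ fails to be $L$-primitive, write $w = u^k$ with $u \in L$ and $k > 1$, and observe that $u$ is then a proper prefix of $w$, contradicting the prefix-set hypothesis. Your explicit remark that $u \neq \varepsilon$ (from $L \subseteq A^+$) is needed to make $u$ a \emph{proper} prefix is a small but worthwhile point of care that the paper's proof leaves implicit.
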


\begin{proof}
Clearly, $ L_{L\mbox{-}p} \subseteq L$. Let $x \in L$, but $x \notin  L_{L\mbox{-}p}$. There exists a word $u \in L$ such that $x = u^k$, for some $k > 1$. Thus, the word $u \in L$ is a prefix of the word $x \in L$. This contradicts that $L$ is a prefix set. Hence, $L = L_{L\mbox{-}p}$.
\end{proof}

\begin{remark}
The converse of Theorem \ref{c5.t.prefix} is not necessarily true. For instance, let $L = \{a, ab\} \subseteq \{a, b\}^+$. Clearly, $L = L_{L\mbox{-}p}$, but $L$ is not a prefix set.
\end{remark}

It is clear that $L_{L\mbox{-}p} \subseteq \sqrt[L]{L}$. Now, we explore the possibilities so that $L_{L\mbox{-}p} = \sqrt[L]{L}$. For this, we need the notion of power of a subset of $A^*$ introduced by Calbrix and Nivat (cf. \cite{calbrix96}).
The \emph{power} of a subset $X$ of $A^*$, denoted by $\mbox{\rm pow}(X)$, is defined as \[\mbox{\rm pow}(X) = \{x^k \;|\; x \in X \;\mbox{and}\; k \geq 1\}.\]

\begin{remark}
Clearly, $\mbox{\rm pow}(A^*_{L\mbox{-}p}) = A^+$.
\end{remark}

\begin{theorem} \label{c5.t.llp-ll}\
\begin{enumerate}
\item[\rm(i)] $ \sqrt[L]{L} \subseteq L \;\Longleftrightarrow\; L_{L\mbox{-}p} = \sqrt[L]{L}$.
\item[\rm(ii)] $L^c = \mbox{\rm pow}(L^c) \; \Longrightarrow \; L_{L\mbox{-}p} = \sqrt[L]{L}$.
\item[\rm(iii)] $L \subseteq A_p^*\;\Longrightarrow\; L_{L\mbox{-}p} = \sqrt[L]{L} = L$.
\end{enumerate}
\end{theorem}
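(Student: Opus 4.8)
The plan is to handle the three parts in order, reducing (ii) to (i) and proving (iii) directly. Throughout I will rely on two facts recorded just before the statement: every $L$-primitive root of a word is itself $L$-primitive, so $\sqrt[L]{L} \subseteq A^*_{L\mbox{-}p}$; and the set $L_{L\mbox{-}p}$ satisfies both $L_{L\mbox{-}p} \subseteq L$ and $L_{L\mbox{-}p} \subseteq \sqrt[L]{L}$.

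For part (i), I would first establish the forward implication. Assuming $\sqrt[L]{L} \subseteq L$, I take an arbitrary $x \in \sqrt[L]{L}$; then $x$ is $L$-primitive and, by hypothesis, $x \in L$, so $x$ is an $L$-primitive word lying in $L$, i.e. $x \in L_{L\mbox{-}p}$. This gives $\sqrt[L]{L} \subseteq L_{L\mbox{-}p}$, and combined with the standing inclusion $L_{L\mbox{-}p} \subseteq \sqrt[L]{L}$ yields equality. The converse is immediate: if $L_{L\mbox{-}p} = \sqrt[L]{L}$, then $\sqrt[L]{L} = L_{L\mbox{-}p} \subseteq L$.

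For part (ii), by part (i) it suffices to deduce $\sqrt[L]{L} \subseteq L$ from $L^c = \mbox{\rm pow}(L^c)$. I would argue by contradiction: suppose some $x \in \sqrt[L]{L}$ lies outside $L$, so $x \in L^c$, and pick $w \in L$ and $k \ge 1$ with $x^k = w$ witnessing $x \in \sqrt[L]{L}$. Since $x \in L^c$, the power $x^k$ belongs to $\mbox{\rm pow}(L^c) = L^c$; but $x^k = w \in L$, contradicting $L \cap L^c = \varnothing$. Hence no such $x$ exists and $\sqrt[L]{L} \subseteq L$.

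For part (iii), I would first show $L = L_{L\mbox{-}p}$. The inclusion $L_{L\mbox{-}p} \subseteq L$ is standing; for the reverse, any $x \in L$ is primitive by hypothesis, hence cannot equal $u^k$ with $u \in L$ and $k > 1$, so $x$ is $L$-primitive and lies in $L_{L\mbox{-}p}$. Next, for each $w \in L$ I would compute $\sqrt[L]{w}$: if $x^k = w$ with $k \ge 1$, then primitivity of $w$ forces $k = 1$ and $x = w$, and $w$ is $L$-primitive by the previous step, so $\sqrt[L]{w} = \{w\}$. Taking the union over $w \in L$ (note $\varepsilon \notin L$ since $L \subseteq A_p^* \subseteq A^+$) gives $\sqrt[L]{L} = L$, completing the chain $L_{L\mbox{-}p} = \sqrt[L]{L} = L$. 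None of these steps presents a genuine obstacle; the only points requiring care are keeping the two standing inclusions straight and, in part (ii), observing that the equation $\mbox{\rm pow}(L^c) = L^c$ is precisely the statement that $L^c$ is closed under taking powers, which is what collides with a power falling back into $L$.
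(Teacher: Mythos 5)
Your proposal is correct and follows essentially the same route as the paper's proof: the same two-inclusion argument for (i), the same contradiction with $\mathrm{pow}(L^c)=L^c$ for (ii), and the same use of the fact that primitive words are $L$-primitive together with $\sqrt[L]{w}=\{w\}$ for primitive $w$ in (iii). The only (cosmetic) difference is that you route (ii) through (i) by proving $\sqrt[L]{L}\subseteq L$, whereas the paper derives the contradiction directly from $x\in\sqrt[L]{L}\setminus L_{L\mbox{-}p}$; the underlying argument is identical.
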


\begin{proof} We first note that $L_{L\mbox{-}p} \subseteq \sqrt[L]{L}$.
\begin{itemize}
\item[(i)] (:$\Longleftarrow$) Since $L_{L\mbox{-}p} \subseteq L$, from the hypothesis, we have $\sqrt[L]{L} \subseteq L$.\\
($\Longrightarrow$:) Let $x \in \sqrt[L]{L}$; then $x$ is $L$-primitive word. Also, from the hypothesis, we have $x \in L$. Thus, $x \in L_{L\mbox{-}p}$. Hence, we have the part (i).

\item[(ii)] Let us assume  that $x\in \sqrt[L]{L}\setminus L_{L\mbox{-}p}$. Since $x$ is an $L$-primitive word and $x \notin L_{L\mbox{-}p}$, we have $x \notin L$. Then, from the hypothesis, we have $x^k \in L^c$, for all $k \ge 1$. But, since $x\in \sqrt[L]{L}$, we have $x \in \sqrt[L]{w}$, for some $w \in L$. That is, there is a number $t \ge 1$, such that $x^t = w (\in L)$; a contradiction. Hence, $\sqrt[L]{L} = L_{L\mbox{-}p}$.

\item[(iii)] Clearly, $L_{L\mbox{-}p} \subseteq L$. Let $x \in L$; from the hypothesis, we have $x \in A_p^*$. By Corollary \ref{c5.c.prim-lprim}, since every primitive word is an $L$-primitive word, we have $x \in L_{L\mbox{-}p}$. Thus, $L = L_{L\mbox{-}p}$.

It is clear that for $w \in A_p^*$, we have $\sqrt[L]{w} = \{w\}$. Since $L \subseteq A_p^*$, we have
 \[\sqrt[L]{L} = \bigcup_{w \in L}\sqrt[L]{w} = \bigcup_{w \in L}\{w\} = L.\]
Hence, if $L \subseteq A_p^*$, we have $ L_{L\mbox{-}p} = \sqrt[L]{L} = L$.
\end{itemize}
\end{proof}

\begin{corollary}
$L = \sqrt[L]{L} \: \Longleftrightarrow \;  L_{L\mbox{-}p} = \sqrt[L]{L} = L$.
\end{corollary}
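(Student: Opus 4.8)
The plan is to deduce this corollary directly from Theorem \ref{c5.t.llp-ll}(i), which asserts that $\sqrt[L]{L} \subseteq L$ holds if and only if $L_{L\mbox{-}p} = \sqrt[L]{L}$. Throughout I would keep in mind the two containments recorded just before that theorem and reused in its proof, namely $L_{L\mbox{-}p} \subseteq L$ and $L_{L\mbox{-}p} \subseteq \sqrt[L]{L}$.

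For the forward implication I would assume $L = \sqrt[L]{L}$. In particular this equality supplies the containment $\sqrt[L]{L} \subseteq L$, so the hypothesis of Theorem \ref{c5.t.llp-ll}(i) is satisfied and I may conclude $L_{L\mbox{-}p} = \sqrt[L]{L}$. Combining this with the assumed equality $\sqrt[L]{L} = L$ gives the full chain $L_{L\mbox{-}p} = \sqrt[L]{L} = L$, which is the desired conclusion. The reverse implication is then immediate: if $L_{L\mbox{-}p} = \sqrt[L]{L} = L$, simply reading off the outer equality yields $\sqrt[L]{L} = L$, that is, $L = \sqrt[L]{L}$.

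Since both directions collapse to an application (or a mere reading) of part (i) of Theorem \ref{c5.t.llp-ll}, I do not anticipate any genuine obstacle. The only point deserving a moment of care is to notice that the single equality $L = \sqrt[L]{L}$ already delivers the inclusion $\sqrt[L]{L} \subseteq L$ needed to trigger part (i); once that observation is made, the statement is a formal consequence of the preceding theorem.
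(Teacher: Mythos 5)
Your argument is correct and is exactly the intended derivation: the paper states this as an immediate corollary of Theorem \ref{c5.t.llp-ll}(i) without writing out a proof, and your two directions (using $L=\sqrt[L]{L}$ to supply the inclusion $\sqrt[L]{L}\subseteq L$ required by part (i), then reading off the outer equality for the converse) fill in precisely that intended reasoning.
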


\begin{remark}
The converse of Theorem \ref{c5.t.llp-ll}(ii) is not necessarily true. For instance, consider $L = \{a, b, a^6\} \subseteq \{a, b\}^+$. Observe that $L_{L\mbox{-}p} = \sqrt[L]{L} = \{a, b\}$. Clearly, since $a^2 \in L^c$, we have $a^6 \in \mbox{\rm pow}(L^c)$; but, $a^6 \notin L^c$. Hence,  $L^c \ne \mbox{\rm pow}(L^c)$.
\end{remark}

\begin{theorem}
$L = \mbox{\rm pow}(L) \; \Longleftrightarrow \; L_{L\mbox{-}p}^c = L^c$.
\end{theorem}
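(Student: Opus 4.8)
The plan is to show that both sides of the claimed equivalence are reformulations of the single closure condition
\[(\star)\qquad u \in L \ \text{and}\ k > 1 \;\Longrightarrow\; u^{k} \in L,\]
i.e.\ that $L$ is closed under proper powers. First I would record the two inclusions that hold for \emph{every} language: $L \subseteq \mathrm{pow}(L)$ (take $k=1$), and, writing $L^{c}_{L\mbox{-}p}$ for the set of $L$-primitive words in $L^{c}$, the inclusion $L^{c}_{L\mbox{-}p} \subseteq L^{c}$. These reduce each equality to one inclusion: $L = \mathrm{pow}(L)$ becomes $\mathrm{pow}(L) \subseteq L$, while $L^{c}_{L\mbox{-}p} = L^{c}$ becomes $L^{c} \subseteq L^{c}_{L\mbox{-}p}$, that is, every (nonempty) word of $L^{c}$ is $L$-primitive.

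For the forward implication I would assume $L = \mathrm{pow}(L)$ and take any $w \in L^{c}$. If $w$ were not $L$-primitive, then $w = u^{k}$ for some $u \in L$ and $k > 1$; since $u \in L$ we get $u^{k} \in \mathrm{pow}(L) = L$, so $w \in L$, contradicting $w \in L^{c}$. Hence every word of $L^{c}$ is $L$-primitive, giving $L^{c}_{L\mbox{-}p} = L^{c}$.

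For the converse I would read $(\star)$ contrapositively. Assuming $L^{c}_{L\mbox{-}p} = L^{c}$, take $y \in \mathrm{pow}(L)$, say $y = u^{k}$ with $u \in L$; the case $k = 1$ gives $y = u \in L$ at once, so suppose $k > 1$. If $y \notin L$ then $y \in L^{c} = L^{c}_{L\mbox{-}p}$, so $y$ is $L$-primitive; but $y = u^{k}$ exhibits $y$ as a proper power of the word $u \in L$, so $y$ is \emph{not} $L$-primitive, a contradiction. Thus $y \in L$, whence $\mathrm{pow}(L) \subseteq L$ and $L = \mathrm{pow}(L)$.

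The computation is short, and I expect the conceptual crux to be routing both ``only if'' inclusions through the single condition $(\star)$ rather than any calculation. The one genuinely delicate point is bookkeeping the empty word: a word of $L^{c}$ equal to $\varepsilon$ is never $L$-primitive, so the identification ``every word of $L^{c}$ is $L$-primitive'' must be read over $A^{+}$ (equivalently, one restricts to $L \subseteq A^{+}$ and complements within $A^{+}$). With that convention fixed, both equalities collapse to $(\star)$ and the equivalence follows.
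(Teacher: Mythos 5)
Your proposal is correct and follows essentially the same route as the paper: each equality is reduced to its one nontrivial inclusion, and both directions are settled by contradiction directly from the definition of $L$-primitivity and of $\mbox{\rm pow}(L)$. Your side remark about the empty word is a genuine point the paper silently ignores (if $\varepsilon \in L^c$ then $L^c_{L\mbox{-}p} = L^c$ cannot hold, e.g.\ $L = \varnothing$ satisfies $L = \mbox{\rm pow}(L)$ but not the right-hand side), so the convention you impose is actually needed for the statement to be literally true.
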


\begin{proof}\
\begin{itemize}
  \item[($\Longrightarrow$:)] Clearly, $L_{L\mbox{-}p}^c \subseteq L^c$. Let $x \in L^c$, but $x \notin L_{L\mbox{-}p}^c$. There exists a word $y \in L$ such that $x = y^k$, for some $k > 1$. Since $y \in L$, we have $y^k \in \mbox{\rm pow}(L)$. It follows that $x \in \mbox{\rm pow}(L)$. But, $L = \mbox{\rm pow}(L)$, we have $x \in L$. This is a contradiction.
  \item[(:$\Longleftarrow$)]  Clearly, $L \subseteq \mbox{\rm pow}(L)$. Let $x \in \mbox{\rm pow}(L)$, but $x \notin L$. There exists a word $y \in L$ such that $x = y^k$, for some $k > 1$. Since, $x \notin L$, we have $x \in L^c$. But, $L_{L\mbox{-}p}^c = L^c$, it follows that $x$ is an $L$-primitive word; which is a contradiction.
\end{itemize}
\end{proof}

\subsection{$L$-primitive words in submonoids}

In this subsection, for comparison, we first investigate the number of primitive words in the submonoids of a free monoid. Further, we study the $L$-primitive words in the submonoids of a free monoid. We count the $L$-primitive words in a submonoid of the free monoid over a unary alphabet. In this case, when $L$ is finite, we prove that a submonoid has either at most one or infinitely many $L$-primitive words. Finally, we leave certain remarks on estimating the number of $L$-primitive words over an arbitrary alphabet. We require the following theorem.

\begin{theorem}[\cite{shyr84}]\label{c5.t.prim-prop}
Let $H$ be a submonoid of $A^*$. $H$ is noncommutative if and only if $|H_p| = \infty$.
\end{theorem}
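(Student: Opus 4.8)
The plan is to prove the two implications separately, treating the forward direction ($H$ noncommutative $\Longrightarrow |H_p| = \infty$) as the substantive one. For the reverse direction I would argue the contrapositive: if $H$ is commutative, then by the characterization recalled in Section~1 there is a word $w \in A^*$ with $H \subseteq \{w\}^*$, and we may replace $w$ by its primitive root $\sqrt{w}$ so as to assume $w$ is primitive. Every nonempty element of $\{w\}^*$ is then a power $w^k$ with $k \ge 1$, which is primitive only when $k = 1$; hence $H_p \subseteq \{w\}$ and $|H_p| \le 1$. In particular $|H_p| < \infty$, so $|H_p| = \infty$ forces $H$ to be noncommutative.

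For the forward direction, suppose $H$ is noncommutative and fix $u, v \in H$ with $uv \ne vu$; since $\varepsilon$ commutes with every word, $u, v \in A^+$. Because $H$ is a submonoid, the words $w_k := u^k v$ all lie in $H$, and they are pairwise distinct since $|w_k| = k|u| + |v|$ is strictly increasing in $k$. Thus it suffices to show that $w_k$ is primitive for all but finitely many $k$; this produces infinitely many distinct primitive words in $H$.

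The core step, and the main obstacle, is the claim that there is a threshold $K$ for which $w_k$ is primitive whenever $k \ge K$. To establish it I would argue by contradiction using the theorem of Fine and Wilf. Suppose $w_k = z^m$ with $m \ge 2$; then $w_k$, and hence its prefix $u^k$, has period $|z| = (k|u| + |v|)/m \le (k|u| + |v|)/2$, while $u^k$ plainly also has period $|u|$. A short computation shows that once $k \ge 2 + |v|/|u|$ one has both $|z| \le k|u|$ and $k|u| \ge |u| + |z| - \gcd(|u|, |z|)$, so the theorem of Fine and Wilf applies to $u^k$ and endows it with the period $d := \gcd(|u|, |z|)$. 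Writing $t$ for the length-$d$ prefix, it follows that $u^k = t^{k|u|/d}$; and since $d \mid |z|$ and $z$ is a prefix of $u^k$, we likewise obtain $z = t^{|z|/d}$. Consequently $w_k = z^m$ and $u^k$ are both powers of $t$, and therefore so is the suffix $v$. But then $u$ and $v$ are powers of the common word $t$ and must commute, contradicting $uv \ne vu$. Hence $w_k$ is primitive for every $k \ge K := \lceil 2 + |v|/|u| \rceil$, which completes the proof.

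I expect the delicate part to be the bookkeeping in the Fine--Wilf step: verifying the length inequality that makes the threshold $K$ uniform in $k$, and carefully transferring periodicity from $w_k$ down to $u^k$ and then to $z$ so as to conclude that $v$ itself is a power of $t$. The remaining arguments, including the easy reverse direction and the distinctness of the $w_k$, are routine.
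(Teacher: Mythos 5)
The paper does not prove this statement at all: it is quoted verbatim from Shyr and Tseng \cite{shyr84}, so there is no internal proof to compare against. Your argument is a correct, self-contained proof. The reverse direction via the characterization $H \subseteq \{w\}^*$ of commutative languages is exactly as expected and gives $|H_p|\le 1$. For the forward direction, your Fine--Wilf computation checks out: if $w_k = u^k v = z^m$ with $m\ge 2$ then $|z| \le (k|u|+|v|)/2$, and the condition $k \ge 2 + |v|/|u|$ does yield $k|u| \ge |u| + |z| - \gcd(|u|,|z|)$, so $u^k$ acquires period $d = \gcd(|u|,|z|)$; since $d$ divides both $|u|$ and $|z|$ and $z$ is a prefix of $u^k$ (which follows from $k \ge |v|/|u|$), both $u$ and $z$, hence also $v = z^m (u^k)^{-1}$ read as a quotient of powers of the length-$d$ prefix $t$, are powers of $t$, contradicting $uv \ne vu$. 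This differs in flavor from the classical route of Shyr--Tseng, which rests on combinatorial results of Lyndon--Sch\"utzenberger type (e.g., that $u^i v^j$ is primitive for all $i,j\ge 2$ whenever $uv \ne vu$) rather than an explicit periodicity threshold; your version is more elementary in that it needs only Fine--Wilf, at the cost of a nonuniform threshold $K$ depending on $|u|,|v|$, which is harmless since only cofinitely many primitive $w_k$ are needed.
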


Now, we observe that a submonoid of $A^*$ contains either at most one primitive word or infinitely many primitive words.

\begin{theorem}
Let $H$ be a submonoid of $A^*$; then either $|H_p|\leq 1$ or $|H_p| = \infty$.
\end{theorem}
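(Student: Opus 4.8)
The plan is to argue by a clean dichotomy on whether the submonoid $H$ is commutative or not, so that the two alternatives in the conclusion correspond exactly to these two cases. The decisive tool for the noncommutative case is the result of Shyr and Tseng recorded in Theorem~\ref{c5.t.prim-prop}, which says that $H$ is noncommutative if and only if $|H_p| = \infty$. Thus, if $H$ fails to be commutative, that theorem immediately delivers $|H_p| = \infty$, which is the second alternative, and nothing further is required.

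It remains to treat the case in which $H$ is commutative, where I aim to establish $|H_p| \le 1$. Here I would invoke the characterization of commutative languages recalled in the preliminaries: a commutative language is contained in $\{w\}^*$ for some word $w \in A^*$. Applying this to $H$ yields a word $w$ with $H \subseteq \{w\}^*$, so that every element of $H$ is a power of $w$. The heart of the argument is then to count how many primitive words can possibly occur in $\{w\}^*$.

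To carry out that count, I would pass to the primitive root: writing $w = p^d$ with $p = \sqrt{w}$ and $d \ge 1$ (as furnished by Proposition~\ref{c5.p.pr}), every nonempty element of $\{w\}^*$ has the form $w^k = p^{dk}$. By uniqueness of the primitive root, such a word is primitive precisely when $dk = 1$, forcing $d = k = 1$; hence the only candidate for a primitive word in $\{w\}^*$ is $w$ itself, and only in the event that $w$ is already primitive. In particular $\{w\}^*$ contains at most one primitive word, and since $H \subseteq \{w\}^*$ we obtain $H_p \subseteq \{w\} \cap A_p^*$, whence $|H_p| \le 1$.

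I do not anticipate a genuine obstacle: the noncommutative case is dispatched by the cited theorem, and the commutative case reduces to the elementary observation that a monoid all of whose elements are powers of a single word can contain at most one primitive word. The only point demanding a little care is the degenerate possibility $w = \varepsilon$, which gives $H_p = \varnothing$, together with the fact that membership of $w$ in $H$ is not guaranteed, so that $|H_p|$ may be $0$ rather than $1$; both eventualities are harmless for the stated bound $|H_p| \le 1$.
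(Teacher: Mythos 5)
Your proposal is correct and follows essentially the same route as the paper: the dichotomy on commutativity, Theorem~\ref{c5.t.prim-prop} for the noncommutative case, and the reduction of a commutative submonoid to $H \subseteq \{w\}^*$ (where the paper simply assumes $w$ primitive without loss of generality, while you spell out the passage to the primitive root). Your explicit treatment of the degenerate cases $w = \varepsilon$ and $w \notin H$ matches the paper's case analysis giving $|H_p| = 0$ or $1$.
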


\begin{proof}
If $H = \{\varepsilon\}$, then $|H_p| = 0$. Let us assume that $H \ne \{\varepsilon\}$. If $H$ is noncommutative, then by Theorem \ref{c5.t.prim-prop}, we have $|H_p| = \infty$. Otherwise, we have $H \subseteq \{w\}^*$, for some word $w \in A^+$. Without loss of generality, assume that $w \in A_p^*$. Thus, according to $w \in H$ or not, we have $|H_p| = 1 \;\mbox{or}\; 0$.
\end{proof}

\begin{corollary}
If $H$ is a nontrivial submonoid of $A^*$, then either $|\sqrt{H}| = 1 \;\mbox{or}\; \infty$.
\end{corollary}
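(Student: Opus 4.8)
The plan is to deduce the corollary directly from the theorem that immediately precedes it, namely the statement that for any submonoid $H$ of $A^*$ we have either $|H_p| \le 1$ or $|H_p| = \infty$. The key observation linking the primitive words $H_p$ of $H$ to its primitive roots $\sqrt{H}$ is that the primitive root map $w \mapsto \sqrt{w}$ restricts to a map from $H \setminus \{\varepsilon\}$ into $\sqrt{H}$, and that $H_p$ is precisely the set of those elements of $H$ that coincide with their own primitive root. So first I would record the two containments $H_p \subseteq \sqrt{H}$ and, more usefully, that every element of $\sqrt{H}$ is the primitive root of some nonempty word in $H$; this is just the definition of $\sqrt{H}$ given in Section 1.

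The main work is to show that $\sqrt{H}$ and $H_p$ have the same cardinality behaviour, i.e. that $|\sqrt{H}| = 1$ exactly when $|H_p| = 1$, and $|\sqrt{H}| = \infty$ exactly when $|H_p| = \infty$. The nontrivial direction is handling the infinite case and the finite case together, and the cleanest route is to argue by cases on whether $H$ is commutative. If $H$ is noncommutative, then by Theorem \ref{c5.t.prim-prop} we have $|H_p| = \infty$, hence $|\sqrt{H}| = \infty$ as well since $H_p \subseteq \sqrt{H}$. If $H$ is commutative, then as recalled in the preliminaries $H \subseteq \{w\}^*$ for some word $w$, which we may take to be primitive; every nonempty element of $H$ is then a power $w^k$ of this single primitive word, so its primitive root is $w$, giving $\sqrt{H} = \{w\}$ and hence $|\sqrt{H}| = 1$. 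Since $H$ is nontrivial, $H \setminus \{\varepsilon\}$ is nonempty, so $\sqrt{H}$ is genuinely a singleton rather than empty.

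I expect the only subtle point to be the requirement that $H$ be \emph{nontrivial}: this is exactly what rules out the degenerate case $H = \{\varepsilon\}$, where $\sqrt{H} = \varnothing$ and $|\sqrt{H}| = 0$ would spoil the stated dichotomy. With nontriviality assumed, $H$ contains a nonempty word, guaranteeing $\sqrt{H} \neq \varnothing$ in the commutative case and closing the argument. In short, the proof mirrors the proof of the preceding theorem almost verbatim, replacing the count of primitive words inside $H$ by the count of their primitive roots, and the dichotomy $|\sqrt{H}| = 1$ or $\infty$ falls out of the same commutative/noncommutative split.
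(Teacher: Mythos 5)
Your argument is correct and is essentially the proof the paper intends: the corollary is left to the reader precisely because it follows from the same commutative/noncommutative split used for the preceding theorem, with $H_p \subseteq \sqrt{H}$ handling the noncommutative case and $\sqrt{H} = \{\sqrt{w}\}$ handling the commutative one (nontriviality ruling out $\sqrt{H} = \varnothing$). Your passing claim that $|\sqrt{H}|$ and $|H_p|$ always agree is slightly loose (a commutative $H$ can have $|H_p| = 0$ but $|\sqrt{H}| = 1$), but your actual case analysis does not rely on it.
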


Let $A = \{a\}$ be a unary alphabet. It is known that $A^*$ is isomorphic to the additive monoid of natural numbers $(\mathds{N}, +)$ under the isomorphism given by $a^k \mapsto k$. Thus, each word $a^k$ of $A^*$ is characterized by its length $k \in \mathds{N}$.
Hence,  we count the $L$-primitive words in the submonoids of $\mathds{N}$, instead of $A^*$. In what follows, $H$ is a nontrivial submonoid of $\mathds{N}$ and $L$ is a nonempty subset of $\mathds{N}$. Now, we count the number of $L$-primitive words in $H$. We begin with the following remark.

\begin{remark}
If $1 \in L$, then according to $1 \in H$ or not, we have $|H_{L\mbox{-}p}| = 1$ or $|H_{L\mbox{-}p}| = 0$, respectively.
\end{remark}

Let us assume that $1 \notin L$. In view of Theorem \ref{c5.t.smisonm} and Theorem \ref{c5.t.num-umgsf}, let $Y$ be the finite minimal generating set of $H$.

\begin{theorem}\label{c5.t.intemp}
If $\gcd(Y) = 1$, then  $|H_{L\mbox{-}p}| = \infty$.
\end{theorem}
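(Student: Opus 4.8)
The plan is to argue inside the additive identification $A^* \cong (\mathds{N}, +)$ already fixed above, under which a nonempty word is a positive integer and the relation $x = u^k$ reads $x = ku$. Thus a positive integer $x$ fails to be $L$-primitive exactly when $x = ku$ for some $u \in L$ and some $k \ge 2$. My strategy is to produce an explicit infinite family of integers that are $L$-primitive no matter what $L$ is (given only that $1 \notin L$), and then to show that $H$ contains all but finitely many of them.

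The family I would use is the set of primes. I claim that, since $1 \notin L$, every prime $p$ is an $L$-primitive word. Indeed, if $p$ were not $L$-primitive, there would exist $u \in L$ and $k \ge 2$ with $p = ku$; then $u$ divides $p$ with quotient $k \ge 2$, so $u < p$, and as $p$ is prime its only divisors are $1$ and $p$, forcing $u = 1$ and contradicting $1 \notin L$. (The case $u = 0$ is vacuous, since $k \cdot 0 = 0 \ne p$.) Hence a prime admits no representation as a proper power of an element of $L$, so every prime belongs to $\mathds{N}_{L\mbox{-}p}$.

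To pass to $H$, note that $Y$ generates $H$ with $\gcd(Y) = 1$, so by Theorem \ref{c5.t.numimonoidgcd1} the monoid $H$ is numerical and $\mathds{N} \setminus H$ is finite. Therefore $H$ contains every sufficiently large integer, in particular all but finitely many primes. Since there are infinitely many primes and each is $L$-primitive, $H$ contains infinitely many $L$-primitive words, i.e. $|H_{L\mbox{-}p}| = \infty$.

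The arithmetic of $H$ enters only through the single consequence that $\gcd(Y) = 1$ makes $H$ cofinite in $\mathds{N}$; once this is available, the step from ``infinitely many $L$-primitive integers'' to ``infinitely many inside $H$'' is automatic. The one genuinely load-bearing observation is that the primes form a \emph{universal} supply of $L$-primitive numbers: their indecomposability, combined with $1 \notin L$, forbids any factorisation $p = ku$ with $k \ge 2$, independently of the size or structure of $L$. I do not expect any other step to present real difficulty.
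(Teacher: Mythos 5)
Your proposal is correct and follows exactly the paper's argument: $\gcd(Y)=1$ makes $H$ a numerical monoid, hence cofinite in $\mathds{N}$ and containing infinitely many primes, and each prime is $L$-primitive because $1\notin L$. You simply spell out more carefully the step that a prime admits no factorisation $p=ku$ with $k\ge 2$ and $u\in L$, which the paper leaves implicit.
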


\begin{proof}
If $\gcd(Y) = 1$, by Theorem \ref{c5.t.numimonoidgcd1}, the submonoid $H$ is a numerical monoid so that $|\mathds{N}\setminus H| < \infty$. Thus, $H$ contains infinitely many prime numbers. Since $1 \notin L$, every prime number is $L$-primitive. Hence, $|H_{L\mbox{-}p}| = \infty$.
\end{proof}

\begin{theorem}\label{c5.t.les1-infy}
If $L$ is a finite set and $\gcd(Y) > 1$, then $|H_{L\mbox{-}p}| \le 1$ or $|H_{L\mbox{-}p}| = \infty$.
\end{theorem}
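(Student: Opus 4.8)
The plan is to exploit the additive structure supplied by the identification $A^* \cong (\mathds{N},+)$. In this setting a positive integer $x$ fails to be $L$-primitive precisely when $x = k\ell$ for some $\ell \in L$ and some $k \ge 2$; that is, the non-$L$-primitive numbers are exactly the proper multiples (by a factor $\ge 2$) of elements of $L$. Since $\gcd(Y) = d > 1$, every element of $H = \langle Y \rangle$ is a multiple of $d$, so each nonzero element of $H$ has the form $nd$ with $n \ge 1$, and consequently $H_{L\mbox{-}p}$ consists of positive multiples of $d$. Dividing by $d$ carries $H$ isomorphically onto $\langle Y/d \rangle$, whose generating set has $\gcd = 1$; by Theorem \ref{c5.t.numimonoidgcd1} this is a numerical monoid, so by definition its complement in $\mathds{N}$ is finite. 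Hence there is an integer $M$ with $nd \in H$ for every $n \ge M$, and the whole question reduces to counting which of these large multiples of $d$ are $L$-primitive. Since the empty word contributes no proper power of a positive word and $1 \notin L$, I may assume every element of $L$ is at least $2$.

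I would then split on whether some element of $L$ divides $d$. Suppose first that $\ell_0 \in L$ with $\ell_0 \mid d$. Then for every $n \ge 2$ one has $nd = (nd/\ell_0)\ell_0$ with $nd/\ell_0 = n(d/\ell_0) \ge 2$, so $nd$ is a proper power of $\ell_0 \in L$ and hence is not $L$-primitive. As every nonzero element of $H$ is of the form $nd$ for some $n \ge 1$, the only element of $H$ that can be $L$-primitive is $d$ itself (the case $n = 1$), which already gives $|H_{L\mbox{-}p}| \le 1$.

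In the remaining case no element of $L$ divides $d$, and here I would show $|H_{L\mbox{-}p}| = \infty$ by exhibiting infinitely many $L$-primitive elements inside $H$. Because $L$ is finite, I can choose a prime $p$ with $p > \max L$ and $p \ge M$, and there are infinitely many such primes. Then $pd \in H$ by the Frobenius bound. Moreover $pd$ is $L$-primitive: if some $\ell \in L$ satisfied $\ell \mid pd$, then since $\ell < p$ and $p$ is prime we have $\gcd(\ell, p) = 1$, whence $\ell \mid d$, contradicting the case assumption. Thus $pd$ is not even a multiple of any element of $L$, let alone a proper power, and distinct primes give distinct such $pd$, so $H_{L\mbox{-}p}$ is infinite.

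The main obstacle is the second case: exhibiting infinitely many $L$-primitive integers is easy, but one must guarantee that infinitely many of them actually lie in $H$. This is exactly where the numerical-monoid reduction earns its keep — it produces the threshold $M$ above which all multiples of $d$ belong to $H$, and choosing the witnesses to be primes exceeding both $\max L$ and $M$ simultaneously secures $L$-primitivity (through coprimality with $p$) and membership in $H$. The finiteness of $L$ enters only through the existence of $\max L$, which is what makes the coprimality argument available.
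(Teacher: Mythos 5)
Your proof is correct and follows essentially the same route as the paper's: the same case split on whether some element of $L$ divides $d=\gcd(Y)$, the same passage to the numerical monoid $H/d$ to obtain infinitely many primes $p$ with $pd\in H$, and the same Euclid's-lemma/coprimality argument showing each such $pd$ is $L$-primitive. The only cosmetic difference is that in the divisibility case you stop at the bound $|H_{L\mbox{-}p}|\le 1$, whereas the paper goes on to decide whether the count is $0$ or $1$.
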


\begin{proof}
We first assume that $l$ $|\!\!\!\!\!\not$\;\; $d$, for all $l\in L$ and claim that $|H_{L\mbox{-}p}| = \infty$. Let $\gcd(Y) = d$. Since $d \neq 1$, by Proposition \ref{c5.t.numimonoidgcd1}, the submonoid $H$ is not a numerical monoid. We define the function \[f: H \longrightarrow \mathds{N}\;\;\mbox{ by}\;\; hf = \frac{h}{d}.\] Clearly, $f$ is a monomorphism and therefore the image of $f$, $\mbox{Im}(f)$, is isomorphic to $H$. By  Theorem \ref{c5.t.numimonoidgcd1}, the submonoid $\mbox{Im}(f)$ is a numerical monoid.

Clearly, $\mbox{Im}(f)$ has infinitely many prime numbers. Let $p \in \mbox{Im}(f)$ be a prime number such that $p > \max(L)$, then $pd \in H$. By Euclid's lemma, $l$ $|\!\!\!\!\!\not$\;\; $pd$, for all $l \in L$. Since $\mbox{Im}(f)$ has infinitely many such prime numbers, we have $|H_{L\mbox{-}p}| = \infty$.

Now, we assume that $l\;|\;d$, for some $l \in L$. Here, we determine $|H_{L\mbox{-}p}|$ with respect to $d \in L$ or not. If $d \notin L$, then clearly $|H_{L\mbox{-}p}| = 0$. If $d \in L$, we consider the cases $d \in H$ or not. If $d \notin H$, then clearly $|H_{L\mbox{-}p}| = 0$. In case $d \in H$, if there is an $l' (\ne d)$ which divides $d$, then $|H_{L\mbox{-}p}| = 0$; otherwise $|H_{L\mbox{-}p}| = 1$.
\end{proof}

\begin{remark}
If $L$ is an infinite subset of $\mathds{N}$, then $|H_{L\mbox{-}p}|$ need not satisfy the Theorem \ref{c5.t.les1-infy}. For instance, let $H$ be the submonoid of $\mathds{N}$ generated by the set $\{4, 6\}$ and $L =  \{4\} \cup \{\mathfrak{P}\setminus \{2,5\}\}$, where $\mathfrak{P}$ is the set of all prime numbers in $\mathds{N}$. We observe that $H_{L\mbox{-}p} = \{4,10\}$ and so $|H_{L\mbox{-}p}| = 2$. Similarly, if $L = \{4\} \cup \{\mathfrak{P}\setminus \{2,5,7\}\}$, then $H_{L\mbox{-}p} = \{4,10,14\}$ and so $|H_{L\mbox{-}p}| = 3$.
\end{remark}

In the following, we make certain remarks on the number of $L$-primitive words in the submonoids of a free monoid over an alphabet of size at least two. First observe that if the submonoid $H$ is $\{\varepsilon\}$, then $|H_{L\mbox{-}p}| = 0$. If $H \ne \{\varepsilon\}$, then by Corollary \ref{c5.c.prim-lprim}, we have the following remark.

\begin{remark}
If $H$ is a noncommutative submonoid of $A^*$, where $|A|\ge 2$, then $|H_{L\mbox{-}p}| = \infty$.
\end{remark}

\section{Conclusion}

Motivated by the work of Ito \emph{et al.} in \cite{ito88}, we have considered a study on the number of primitive words in the languages of semi-flower automata (SFA). SFA precisely accept finitely generated submonoids of free monoids \cite{shubh12}.  We extended the study to submonoids of free monoids and observed that the number is either at most one or infinite. Further, we have initiated a study on the number of $L$-primitive words in submonoids of free monoids. If $L$ is a finite, we have counted the number of $L$-primitive words in the submonoids of a free monoid  over a unary alphabet. When $L$ is infinite, the problem appears to be more complicated and a systematic study in this regard is necessary. In case the alphabet size is at least two, we could remark only on the number of $L$-primitive words in noncommutative submonoids. One can consider the problem in commutative submonoids.


\begin{thebibliography}{10}

\bibitem{berstel85}
J.~Berstel and D.~Perrin.
\newblock {\em Theory of codes}, volume 117 of {\em Pure and Applied
  Mathematics}.
\newblock Academic Press Inc., 1985.

\bibitem{calbrix96}
H.~Calbrix and M.~Nivat.
\newblock Prefix and period languages of rational {$\omega$}-languages.
\newblock In {\em Developments in language theory, {II} ({M}agdeburg, 1995)},
  pages 341--349. World Sci. Publ., River Edge, NJ, 1996.

\bibitem{czeizler10}
E.~Czeizler, L.~Kari, and S.~Seki.
\newblock On a special class of primitive words.
\newblock {\em Theoret. Comput. Sci.}, 411(3):617--630, 2010.

\bibitem{michael04}
M.~Domaratzki.
\newblock {\em Trajectory-based operations}.
\newblock PhD thesis, Queen's University, Canada, 2004.

\bibitem{hsiao02}
H.~K. Hsiao, C.~C. Huang, and S.~S. Yu.
\newblock Word operation closure and primitivity of languages.
\newblock {\em J.UCS}, 8(2):243--256 (electronic), 2002.
\newblock Advances and trends in automata and formal languages.

\bibitem{ito88}
M.~Ito, M.~Katsura, H.~J. Shyr, and S.~S. Yu.
\newblock Automata accepting primitive words.
\newblock {\em Semigroup Forum}, 37(1):45--52, 1988.

\bibitem{kari98}
L.~Kari and G.~Thierrin.
\newblock Word insertions and primitivity.
\newblock {\em Util. Math.}, 53:49--61, 1998.

\bibitem{lothaire83}
M.~Lothaire.
\newblock {\em Combinatorics on words}, volume~17 of {\em Encyclopedia of
  Mathematics and its Applications}.
\newblock Addison-Wesley Publishing Co., Reading, Mass., 1983.

\bibitem{lyndon62}
R.~C. Lyndon and M.~P. Sch{\"u}tzenberger.
\newblock The equation {$a^{M}=b^{N}c^{P}$} in a free group.
\newblock {\em Michigan Math. J.}, 9:289--298, 1962.

\bibitem{rosales09}
J.~C. Rosales and P.~A. Garc{\'{\i}}a-S{\'a}nchez.
\newblock {\em Numerical semigroups}, volume~20 of {\em Developments in
  Mathematics}.
\newblock Springer, New York, 2009.

\bibitem{shyr84}
H.~J. Shyr and D.~C. Tseng.
\newblock Some properties of dense languages.
\newblock {\em Soochow J. Math.}, 10:127--131, 1984.

\bibitem{shubh12}
S.~N. Singh.
\newblock {\em Semi-Flower Automata}.
\newblock PhD thesis, {IIT} Guwahati, India, 2012.
\newblock URL: http://www.iitg.ac.in/kvk/papers/sns-phd-thesis.pdf.

\end{thebibliography}

\end{document}